\documentclass[article]{IEEEtran}
\pdfoutput=1
\usepackage[utf8]{inputenc}
\usepackage{dsfont}

\usepackage[compatible]{nomencl}

\makenomenclature

\makeindex

\usepackage{richard_config}

\ifCLASSOPTIONcompsoc
	\usepackage[noadjust,nocompress]{cite}
\else
	\usepackage[noadjust]{cite}
\fi


%
\ifCLASSINFOpdf
  \usepackage[pdftex]{graphicx}
  \graphicspath{{./figures/}}
\else
\fi

\author{\IEEEauthorblockN{Abdoulaye Tall\IEEEauthorrefmark{1}, Richard Combes\IEEEauthorrefmark{2}, Zwi Altman \IEEEauthorrefmark{1} and Eitan Altman\IEEEauthorrefmark{3}} \\ \IEEEauthorblockA{\IEEEauthorrefmark{1}Orange Labs
38/40 rue du General Leclerc,92794 Issy-les-Moulineaux \\Email: \{abdoulaye.tall,zwi.altman\}@orange.com}\\ \IEEEauthorblockA{\IEEEauthorrefmark{2}KTH, Royal Institute of Technology, Stockholm, Sweden\\ Email: rcombes@kth.se}\\ \IEEEauthorblockA{\IEEEauthorrefmark{3}INRIA Sophia Antipolis, 06902 Sophia Antipolis, France\\Email:eitan.altman@sophia.inria.fr}
\thanks{This work has been partially carried out in the framework of the FP7 UniverSelf project under EC Grant agreement 257513}}

\title{Distributed coordination of self-organizing mechanisms in communication networks}
\begin{document}
\maketitle
\begin{abstract}
The fast development of the \ac{SON} technology in mobile networks renders the problem of coordinating \ac{SON} functionalities operating simultaneously critical. \ac{SON} functionalities can be viewed as control loops that may need to be coordinated to guarantee conflict free operation, to enforce stability of the network and to achieve performance gain.
This paper proposes a distributed solution for coordinating \ac{SON} functionalities. It uses Rosen's concave games framework in conjunction with convex optimization. The \ac{SON} functionalities are modeled as linear \ac{ODE}s. The stability of the system is first evaluated using a basic control theory approach. The coordination solution consists in finding a linear map (called coordination matrix) that stabilizes the system of \ac{SON} functionalities. It is proven that the solution remains valid in a noisy environment using Stochastic Approximation. A practical example involving three different \ac{SON} functionalities deployed in \acp{BS} of a \ac{LTE} network demonstrates the usefulness of the proposed method.

\begin{IEEEkeywords}
Self-Organizing Networks, Concave Games, SON Coordination, Stochastic Approximation
\end{IEEEkeywords}

\end{abstract}

\nomenclature{$det(\bullet)$}{Determinant of a matrix}
\nomenclature{$\textbf{tr}(\bullet)$}{Trace of a matrix}
\nomenclature{$A \prec 0$}{$A$ is negative definite}
\nomenclature{$A^T$}{Transpose of matrix $A$}
\nomenclature{$eig(A)$}{Eigenvalues of $A$}
\nomenclature{$\cal C$}{Set of matrices having the same particular form (e.g., with zero elements at specific positions)}
\nomenclature{$[A]_{k,k}$}{$k^{th}$ order leading principal submatrix of $A$}
\nomenclature{$\nabla_x f$}{Gradient of $f$ with regard to $x$}
\nomenclature{$JF(\bullet)$}{Jacobian of $F()$}
\nomenclature{$\mathds{1}_{\{\textit{cond}\}}(x)$}{Indicator function on the set of $x$ values satisfying the condition $\textit{cond}$}
\nomenclature{$\mathbb{E}(\bullet)$}{Expectation of a random variable}
\nomenclature{$\dot{\alpha}$}{Derivative of $\alpha$ over time}
\nomenclature{$[\bullet]_S^+$}{Projection on the set $S$}
\nomenclature{$\bar{x}$}{Notation for constants often representing target values}
\nomenclature{$\theta$}{Vector of parameters}
\nomenclature{$\theta^*$}{Equilibrium points of system comprizing control loops}
\nomenclature{$A$}{Real matrix representing linearized system of \ac{SON} functions}

\section{Introduction}

The \ac{RAN} landscape is becoming increasingly complex and heterogeneous with co-existing and co-operating technologies. \ac{SON} mechanisms have been introduced as a means to manage complexity, to reduce cost of operation, and to enhance performance and profitability of the network. Self-organizing networks enable automation of the configuration of newly
deployed network nodes (self-configuration), automation of parameter tuning for \acp{KPI} improvement (self-optimization) and also automation of diagnostic and reparation of faulty network nodes (self-healing). In \ac{LTE} networks, large scale deployment of the \ac{SON} functionalities has started with self-configuration \ac{SON}s to simplify the network deployment, and that of self-optimization functionalities is expected to follow.

Self-optimization mechanisms can be viewed as control loops, that can be deployed in the management or the control plane. The former is often denoted as  centralized \ac{SON} and the latter as distributed \ac{SON}. In the centralized case, the \ac{SON} algorithms are deployed in the \ac{OMC} or in the \ac{NMS} which are part of the \ac{OSS}. Centralized \ac{SON} benefits from abundant data (metrics and \acp{KPI}) and computational means necessary for processing and running powerful optimization methods \cite{LuocentSON,SugacentSON}. The main drawback of the centralized approach is related to the long time scale that is typically used, in the order of an hour and more. Hence the \ac{SON} algorithms cannot adapt the network to traffic variations that occur at short time scales.

The second approach, namely the distributed \ac{SON}, is more scalable since the optimization is performed locally involving one or several \acp{BS}. The main advantage of the distributed \ac{SON} is its higher reactivity, namely its ability to track quick variations in the propagation conditions or in the traffic \cite{CombesInfocom2012,rze2013icic} and to adapt system parameters in the time scale of seconds (i.e. flow level duration). The higher reactivity sometimes impacts the type of solution sought, namely a solution which targets local minima instead of global minima. However, distributed optimization can also reach global minima \cite{CombesInfocom2012}.

\ac{SON}s are often designed as stand alone functionalities, and when triggered simultaneously, their interactions are not always predictable. The deployment of multiple control loops raises the questions of conflicts, stability and performance. The topics of conflict resolution, coordination, and a framework for managing multiple \ac{SON} functionalities are receiving a growing interest (see for example \cite{SEMAFOUR2009,SEMAROUR2011, UNIVERSELF2013}). Most contributions that have addressed the coordination problem  between specific \ac{SON} functionalities, provide a solution implemented in a centralized \cite{LiuMROMLB,klessig2012improving} or distributed \cite{RICHARD_RelaysTNSM} fashion. In a centralized solution, the  \ac{SON} coordination can be treated as a multi-objective optimization \cite{VlacheasSONcoord}. From the standardization point of view, the coordination problem has been addressed as a centralized, management-plane problem \cite{3gpp.28.628}.

Little material has been reported on distributed, control plane solutions for the coordination problem in spite of its higher reactivity, attractiveness from an architecture point of view and potential performance gain. The aim of this paper is to provide a generic coordination mechanism which is practically implementable. The contributions of the paper are the following:
\begin{itemize}
\item The problem of SON coordination is analyzed using a control theory/stochastic approximation-based framework.
\item The case of fully distributed coordination is addressed.
\item It is shown that coordination can be formulated as a convex optimization problem.
\item The coordination solution is applied to a use case involving 3 \ac{SON} functions deployed in several \acp{BS} of a wireless network.
\end{itemize}

	A first version of this paper has already been published in \cite{rze2012coordWiOpt}. New results presented here include the formulation of the coordination problem as a convex optimization problem with \ac{LMI} constraints: stability constraint and connectivity constraints related to the capability of the self-organizing nodes to exchange information via the transport network. The merit of the proposed solution is its capability to handle a large number of control loops and enforce their stability, as illustrated in the use case of \ac{SON} deployment in a \ac{LTE} network. To our knowledge, this is the first generic control plane solution to the problem of SON coordination in a mobile network.

	The paper is organized as follows: in Section~\ref{sec:parallel_control_loops} we state the proposed model for interaction of \ac{SON} mechanisms running in parallel and the coordination problem to be solved. In Section~\ref{sec:coordinationgen} we focus on the case where performance indicators are affine functions of the parameters, and propose a practically implementable coordination mechanism. In Section~\ref{sec:full distributed} we study fully distributed coordination when no exchange of information between \acp{SON} is needed, and we show that it is not always possible. In Section~\ref{sec:coordmeth} the coordination problem is formulated as a convex optimization problem which can be fast solved with modern computers. In Section~\ref{sec:application} we illustrate the application of our model to traffic management in wireless networks with two examples including a use case of the coordination method in a \ac{LTE} network with three different \ac{SON} functionalities deployed in each \ac{BS}. Section~\ref{sec:conclusion} concludes the paper. In appendices~\ref{app:odes} and~\ref{app:linear_ode} we recall the basic notions of stability for \acp{ODE} and linear \acp{ODE} respectively.
	
\printnomenclature

\section{Problem Description}\label{sec:parallel_control_loops}
\subsection{\ac{SON} model}
	A \ac{SON} mechanism is an entity which monitors a given performance indicator and controls a scalar parameter. The current value of the performance indicator is observed, and the parameter is modified accordingly to attain some objective. We consider $I > 1$ \ac{SON} mechanisms operating simultaneously. We define $\theta_i$ the parameter controlled by the $i$-th \ac{SON} mechanism and $\theta = (\theta_1,\dotsc,\theta_I)$ the vector of parameters. The $i$-th \ac{SON} mechanism monitors a performance indicator $F_i(\theta)$ and updates its parameter $\theta_i$ proportionally to it. $F(\theta) = (F_1(\theta),\dotsc,F_I(\theta))$ is the vector of update of $\theta$.

	We say that the $i$-th \ac{SON} mechanism operates in \emph{stand-alone} mode if all parameters but $\theta_i$ are kept constant. The $i$-th \ac{SON} mechanism operating in stand-alone is described by the \ac{ODE}:
\begin{align}\label{eq:son_standalone}
	\dot{ \theta_i } = F_i(\theta) \;,\; \dot{ \theta_j } = 0 , j \neq i.
\end{align}
	We say that the \ac{SON} mechanisms operate in \emph{parallel} mode if all parameters are modified simultaneously, which is described by the \ac{ODE}:
\begin{align}\label{eq:son_simultaneous}
	\dot{ \theta } &= F(\theta).
\end{align}
 	We say that the $i$-th \ac{SON} mechanism is stable in stand-alone mode if there exists $\theta^{*,i}_i$ for fixed $\theta_j$, $j \neq i$ which is \emph{asymptotically stable} for~\eqref{eq:son_standalone}. The definition of asymptotic stability is recalled in appendix~\ref{app:odes}. It is noted that $\theta^{*,i}_i$ depends on $\theta_j$, $j \neq i$. We say that the \ac{SON} mechanisms are stable in parallel mode if there exists $\theta^*$ which is asymptotically stable for~\eqref{eq:son_simultaneous}. Typically, the \ac{SON} mechanisms are designed and studied in a stand-alone manner, so that stand-alone stability is verified. 
	
	However, stand-alone stability does not imply parallel stability. First consider a case where $F_i(\theta)$ does not depend on $\theta_j$, for all $j \neq i$. Then~\eqref{eq:son_simultaneous} is a set of $I$ parallel independent \acp{ODE}, so that stand-alone stability implies parallel stability. On the other hand, if there exists $i \neq j$ such that $F_i(\theta)$ depends on $\theta_j$, then the situation is not so clear-cut. We say that \ac{SON} $i$ and $j$ \emph{interact}. Namely, interaction potentially introduces \emph{instablity}.
	
	In the remainder of this article we will be concerned with conditions for parallel stability, and designing coordination mechanisms to force stability whenever possible.

\subsection{Stability characterization}\label{subsec:stabchar}
	Two particular cases of parallel mechanisms will be of interest. The first case is what we will call \emph{zero-finding} algorithms. Each \ac{SON} mechanism monitors the value of a performance indicator and tries to achieve a fixed target value for this performance indicator. Namely:
\begin{equation}\label{eq:son_target}
	F_i(\theta) =  f_i(\theta) - \overline{f}_i,
\end{equation}
	where $f_i$ is the performance indicator monitored by \ac{SON} $i$ and $\overline{f}_i$ - a target level for this performance indicator. The goal of the $i$-th \ac{SON} mechanism is to find $\theta_i^*$ for fixed $\theta_j$, $j \neq i$ such that $f_i(\theta_1,\dotsc,\theta_i^*,\dotsc,\theta_I) = \overline{f}_i$. If $\theta_i \mapsto f_i(\theta_1,\dotsc,\theta_i,\dotsc,\theta_I)$ is strictly decreasing $1 \leq i \leq I$ then stand-alone stability is assured. Indeed, $V_i(\theta) = (f_i(\theta) - \overline{f}_i)^2$ would be a Lyapunov function for \eqref{eq:son_target}.

	Another case of interest is maximization algorithms. Each \ac{SON} mechanism tries to maximize a given performance indicator. There exists a continuously differentiable function $g_i$ such that:
\begin{equation}\label{eq:son_maximize}
	F_i(\theta) =  \nabla_{\theta_i} g_i(\theta).
\end{equation}
	In stand-alone mode, \ac{SON} $i$ indeed converges to a local maximum of $\theta_i \to g_i(\theta)$. If we restrict $\theta$ to a closed, convex and bounded set and if $\theta_i \mapsto g_i(\theta_1,\dotsc,\theta_i,\dotsc,\theta_I)$ is concave $1 \leq i \leq I$ , we fall within the framework of \emph{concave games} considered in \cite{RosenConvexGame}.
	Note that zero-finding algorithms can be rewritten as maximization algorithms by choosing $g_i(\theta) = -(f_i(\theta) - \overline{f}_i)^2$.
	
	An important result of \cite{RosenConvexGame} for parallel stability is given in the following theorem. Denote by $w \in \mathbb{R}_+^I$ a vector of \emph{real positive weights}.
	\begin{theorem} \label{th:diagstrictconc}
		 If $\sum_{i=1}^I w_i g_i(\theta)$ is diagonally strictly concave for $\theta$ in a convex bounded set $S \subset \mathbb{R}^I$, then the system of \acp{ODE} $\dot{\theta} = w^T \cdot F(\theta)$ admits a unique equilibrium point that is asymptotically stable in $S$.
	\end{theorem}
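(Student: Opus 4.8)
The plan is to follow Rosen's treatment of concave games and to split the statement into (i) existence and uniqueness of the equilibrium and (ii) its asymptotic stability, the latter through an explicit Lyapunov function. Throughout I read the dynamics as the component-wise weighted flow $\dot\theta_i = w_i F_i(\theta)$, i.e. $\dot\theta = g(\theta)$ with $g(\theta) = (w_1 F_1(\theta),\dotsc,w_I F_I(\theta))^T$; since $F_i = \nabla_{\theta_i} g_i$ by~\eqref{eq:son_maximize}, this $g$ is exactly Rosen's \emph{pseudo-gradient} of the weighted payoff $\sum_i w_i g_i$. The single structural fact driving the whole argument is that diagonal strict concavity of $\sum_i w_i g_i$ on $S$ is equivalent to the strict monotonicity inequality
\[
(\theta^1 - \theta^0)^T g(\theta^0) + (\theta^0 - \theta^1)^T g(\theta^1) > 0 \qquad \text{for all } \theta^0 \neq \theta^1 \text{ in } S .
\]
I would first record this equivalence, either via the integral form of the condition or via Rosen's sufficient condition that the symmetrized Jacobian of $g$ satisfy $\mathbf{J}g(\theta) + \mathbf{J}g(\theta)^T \prec 0$ on $S$.

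For existence of an equilibrium $\theta^*$, i.e. a point with $g(\theta^*)=0$ (equivalently $F(\theta^*)=0$ since $w_i>0$), I would invoke Rosen's existence result for concave games: $S$ is convex and bounded and each $g_i$ is concave in its own variable, so a Kakutani fixed-point argument yields an equilibrium. Uniqueness is then immediate from the monotonicity inequality: if $\theta^0$ and $\theta^1$ were two distinct equilibria, both terms on the left-hand side vanish because $g(\theta^0)=g(\theta^1)=0$, contradicting strict positivity. Hence $\theta^*$ is the unique equilibrium.

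For asymptotic stability I would take the candidate Lyapunov function $V(\theta) = \tfrac12\|\theta - \theta^*\|^2$, which is positive definite about $\theta^*$. Differentiating along trajectories gives $\dot V = (\theta-\theta^*)^T\dot\theta = (\theta-\theta^*)^T g(\theta)$. Applying the monotonicity inequality with $\theta^0=\theta$, $\theta^1=\theta^*$ and using $g(\theta^*)=0$ yields $(\theta^*-\theta)^T g(\theta) > 0$, that is $\dot V < 0$ for every $\theta\neq\theta^*$ in $S$. By the standard Lyapunov criterion recalled in appendix~\ref{app:odes}, this establishes asymptotic stability of $\theta^*$, with $S$ in its basin of attraction; forward invariance of the relevant region follows from $\dot V<0$, since the sublevel sets of $V$ are then positively invariant.

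I expect the Lyapunov computation to be the easy part once the monotonicity inequality is available; the main obstacle is the bookkeeping around existence. Specifically, one must justify that the equilibrium is \emph{interior} to $S$, so that stationarity reads $\nabla_{\theta_i} g_i(\theta^*)=0$ rather than a projected first-order condition on $\partial S$ — on the boundary $\dot V$ need not be negative and the flow would have to be analyzed as a projected dynamics. A secondary subtlety is cleanly translating the hypothesis "diagonally strictly concave" into the pseudo-gradient monotonicity inequality, which is where Rosen's characterization via the symmetrized Jacobian does the real work and which I would state carefully before using it in both the uniqueness and the stability steps.
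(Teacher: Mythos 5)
Your proposal is correct, but note first that the paper itself contains no proof of Theorem~\ref{th:diagstrictconc}: the result is imported from Rosen \cite{RosenConvexGame}, so the only meaningful comparison is with Rosen's original argument --- which is essentially what you have reconstructed. Your three steps (existence via a Kakutani fixed-point argument for the constrained concave game, uniqueness by evaluating the diagonal strict concavity inequality at two zeros of the pseudo-gradient, stability via $V(\theta)=\tfrac12\|\theta-\theta^*\|^2$ with $\dot V=(\theta-\theta^*)^T g(\theta)<0$) are exactly Rosen's existence, uniqueness and dynamic-stability theorems. One detail worth keeping: the monotonicity inequality you state, $(\theta^1-\theta^0)^T g(\theta^0)+(\theta^0-\theta^1)^T g(\theta^1)>0$, is Rosen's \emph{definition} of diagonal strict concavity rather than something to be derived from it, and it silently corrects a sign typo in the paper's own appendix, where both terms are printed with the factor $(\theta_0-\theta_1)^T$ --- a condition no function could satisfy, since swapping $\theta_0$ and $\theta_1$ flips its sign.

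The interiority caveat you flag at the end is the one genuine gap, and it is more than bookkeeping: for the unprojected flow the statement can literally fail. Take $I=1$, $w_1=1$, $g_1(\theta)=-(\theta+1)^2$ and $S=[0,1]$: diagonal strict concavity holds on $S$ (the left-hand side of the inequality equals $2(\theta^1-\theta^0)^2$), yet $F_1(\theta)=-2(\theta+1)$ has no zero in $S$ and trajectories exit $S$ through $\theta=0$, so there is no equilibrium in $S$ at all. Rosen's theorem survives because his dynamics are the flow \emph{projected} onto $S$, and ``equilibrium'' there means the Nash point of the constrained game, which may lie on $\partial S$ and satisfies a variational (normal-cone) first-order condition rather than $g(\theta^*)=0$; diagonal strict concavity combined with that condition still makes $\|\theta(t)-\theta^*\|^2$ strictly decreasing under the projected flow. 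So to close your argument you must either add the hypothesis that $F$ has a zero in the interior of $S$ (which is what the paper implicitly assumes when it linearizes around $\theta^*$, and is consistent with its use of the projections $[\,\cdot\,]_{S_i}^+$ in the stochastic implementation), or carry out the projected-dynamics version. Your uniqueness and Lyapunov computations are the correct core of either variant; note only that in the unprojected interior case, ``asymptotically stable in $S$'' additionally requires the sublevel balls of $V$ you invoke to lie inside $S$, since diagonal strict concavity gives you no control outside $S$.
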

	If we denote by $J_{F,w}$ the Jacobian of $w \cdot F(\theta) = [w_1 F_1(\theta),...,w_I F_I(\theta)]$, a sufficient condition for diagonal strict concavity is that $J_{F,w} + J_{F,w}^T$ is negative definite.

	Note that \eqref{eq:son_simultaneous} is a special case of the \acp{ODE} considered in Theorem \ref{th:diagstrictconc} with $w_i = 1, i=1, \dots, I$. Without diagonal strict concavity there is no guarantee that parallel stability occurs, and coordination is needed.
	
\subsection{Linear case}
	In the remainder of this paper, we study the case where $F$ is affine:
\begin{equation}\label{eq:son_linear}
	F(\theta) = A \theta + b,
\end{equation}
with $b$ a vector of size $I$ and $A$ a matrix of size $I \times I$. We assume that $A$ is invertible and we define $\theta^* = -A^{-1} b$. 
	The \ac{SON} mechanisms running in parallel are described by the linear \ac{ODE}:
\begin{equation}\label{eq:son_linear_ode}
	\dot{\theta} = A \theta + b = A (\theta - \theta^*).
\end{equation}
	It is noted that in the linear case, we always fall within the scope of zero-finding algorithms described previously, by defining:
\begin{align}\label{eq:son_linear_ode_zero}
	f_i(\theta) =  \sum_{1 \leq j \leq I} A_{i,j} \theta_j \;,\; \overline{f}_i &= -b_i, \\
	\dot{\theta}_i = f_i(\theta) - \overline{f}_i.
\end{align}
	In particular, stand-alone stability occurs \ac{iff} $A_{i,i} < 0$ , $1 \leq i \leq I$, i.e all the diagonal terms of $A$ are strictly negative. Basic results on linear \acp{ODE} are recalled in appendix~\ref{app:linear_ode}. Namely, parallel stability holds \ac{iff} all the eigenvalues of $A$ have a strictly negative real part. We then say that $A$ is a \emph{Hurwitz matrix}.
	
	For practical systems, performance indicators $F(\theta)$ need not be linear functions of $\theta$. However, as long as they are smooth, they can be approximated by linear functions using a Taylor expansion in the neighborhood of a stationary point assuming a Taylor expansion exists. Consider a point $\theta^*$ with $F(\theta^*) = 0$. If the values of $\theta$ are restricted to a small neighborhood of $\theta^*$:
\begin{equation} 
	F(\theta) \approx J F(\theta^*) (\theta - \theta^*),
\end{equation}
with $J F(\theta^*)$ the Jacobian of $F$ evaluated at $\theta^*$. The Hartman-Grossman theorem (\cite{HartmanGrossman}) states that on a neighborhood of $\theta^*$, stability of the system with linear approximation implies stability of the original, non-linear system. Hence implementing the proposed coordination mechanism where $A$ is replaced by $J F$ ensures stability if we constrain $\theta$ to a small neighborhood of $\theta^*$.
	
	The parameters $A$ and $b$ might be unknown, and we can only observe noisy values of $F(\theta)$ for different values of $\theta$. The crudest approach is to estimate $A$ and $b$ through finite differences:
	\begin{align}
	a_{i,j} &\approx \frac{ f_j(\theta +  e_i \delta \theta_i) - f_j(\theta -  e_i \delta \theta_i)  }{ 2 \delta \theta_i }, \\
	b_{i} &\approx f_i(0).
	\end{align}
	with $e_i$ the $i$-th unit vector. The results are averaged over several successive measurements and additive noise is omitted for notation clarity. In general, the measurements of $F$ are obtained by calculating the time average of some output of the network during a relatively long time, so that a form of the central limit theorem applies and the additive noise is Gaussian. In this case, a better method is to employ \emph{least-squares regression}. Least-squares regression is a well studied topic  with very efficient numerical methods (\cite{LeastSquaresBjorck}) even for large data sets so that estimation of $A$ and $b$ is not computationally difficult.
	
	Finally, since practical systems do not remain stationary for an infinite amount of time, a database with values of $A$ and $b$ for each set of operating conditions must be maintained. In the context of wireless networks, the relationship between parameters and performance indicators changes when the traffic intensity changes because of daily traffic patterns. For instance, during the night traffic is very low, and traffic peaks are observed at the end of the day. A database with estimated values of $A$ and $b$ at (for instance) each hour of the day should be constructed.
	
\section{Coordination} \label{sec:coordinationgen}
\subsection{Coordination mechanism}
	If $A$ has at least one eigenvalue with positive or null real part, convergence to $\theta^*$ does not occur, and a coordination mechanism is needed. We consider a \emph{linear} coordination mechanism, where $A$ is replaced by $C A$ with $C$ a $I \times I$ real matrix. The \ac{ODE} for the coordinated system is:
\begin{equation}\label{eq:son_linear_ode_coord}
	\dot{\theta} = C A (\theta - \theta^*).
\end{equation}
	The coordination mechanism can be interpreted as transforming the performance indicator monitored by \ac{SON} $i$ from $f_i$ to a linear combination of the performance indicators monitored by all the \ac{SON} mechanisms. 
	 As explained in appendix~\ref{app:linear_ode}, stability is achieved if there exists a symmetric matrix $X$ such that: \begin{align}
	(C  A)^T X + X C A \prec 0 \;\;,\;\;	X \succ 0,
\end{align}
where $X \succ 0$ denotes that $X$ is positive definite. In particular, \begin{equation} 
V(\theta) = (\theta - \theta^*)^T X (\theta - \theta^*),
\end{equation} 
acts as a Lyapunov function.
\subsection{Distributed implementation}	
	The choice for the coordination matrix $C$ is not unique. For instance $C = -A^{-1}$ ensures stability. For the coordination mechanism to be scalable with respect to the number of \acp{SON}, $C$ should be chosen to allow \emph{distributed} implementation. We say that \ac{SON} $j$ is a neighbor of \ac{SON} $i$ if $\frac{\partial f_j}{\partial \theta_i} \neq 0$. We define ${\cal I}_i$ the set of neighbors of $i$. The coordination mechanism is distributed if each \ac{SON} needs only to exchange information with its neighbors.
	
	We give an example of a coordination mechanism which can always be distributed. The mechanism is based on a \emph{separable} Lyapunov function. Define the weighted square error:
\begin{equation}
	V(\theta) = \sum_{i=1}^I w_i (f_i(\theta) - \overline{f}_i)^2 = (\theta - \theta^*)^T A^T W A (\theta - \theta^*),
\end{equation}
with $W = \diag(w)$ i.e the diagonal matrix with diagonal elements $\Set{w_i}_{1 \leq i \leq I}$.
	Coordination is achieved by following the gradient of $-V$ so that $V$ is a Lyapunov function:
\begin{equation}
	\dot{\theta} = - \nabla_{\theta} V(\theta) =  - A^T W A (\theta - \theta^*).
\end{equation}
	Namely, we choose $C = - A^T W$. We can verify that the mechanism is distributed:
\begin{equation}
	\dot{\theta}_i =  \sum_{j=1}^I  2 w_i  \frac{\partial f_j}{\partial \theta_i}  (  f_j(\theta) - \overline{f}_j ) =  \sum_{j \in {\cal I}_i} 2 w_j  \frac{\partial  f_j}{\partial  \theta_i}  (  f_j(\theta) - \overline{f}_j ).
\end{equation}
	Indeed, the update of $\theta_i$ only requires knowledge of $\frac{\partial f_j}{\partial \theta_i}$ and $f_j(\theta) - \overline{f}_j$, for $j \in {\cal I}_i$, and this information is available from the \emph{neighbors} of $i$.
	
	It is also noted that such a mechanism can be implemented in an \emph{asynchronous manner}, i.e the components of $\theta$ are updated in a round-robin fashion, or at random instants, and the average frequency of update is the same for all components. The reader can refer to \cite{Bertsekas}[chapters 6-8] for the round-robin updates  and \cite{Kushner}[chapter 12] for the random updates. Asynchronous implementation is important in practice because if the \acp{SON} are not co-located, maintaining clock synchronization among the \acp{SON} would generate a considerable amount of overhead.
\subsection{Stochastic Control Stabilization}
	In practical systems, \acp{ODE} are replaced by stochastic approximation algorithms. Indeed, the variables are updated at discrete times proportionally to functions values which are noisy. 
	
	The noise in the function values is due to the fact that time is slotted and functions are estimated by averaging certain counters during a time slot. For example, the load of a \ac{BS} in a mobile network is often estimated by evaluating the proportion of time during which the scheduler is busy, and the file transfer time is estimated by averaging the file transfer times of all flows occurring in a certain period of time. The noise is also due to intrinsic stochastic nature of real systems, for example in wireless networks the propagation environment is inherently non-deterministic (because of fading, mobility, etc.) so the \ac{SINR} will be noisy. 
	
	When the noise in the measurements of the function values is of Martingale difference type, the mean behavior of those \ac{SA} algorithms matches with the system of \acp{ODE}. Note that we consider Martingale difference type of noise but the SA results hold for much more general noise processes (stationary, ergodic). In \cite{RICHARD_RelaysTNSM} for example, \ac{SA} results are used without the Martingale difference property.
	
	The initial system of control loops is in reality a system of \ac{SA} algorithms, with one of them written as
\begin{equation}\label{s.a.simple}
\theta_i[k+1] = \left[\theta_i[k] + \epsilon_k(f_i(\theta[k]) + N_k^i) \right]_{S_i}^+
\end{equation}
where $[.]_{S_i}^+$ is the projection on the interval $S_i = [a_i,b_i]; a_i<b_i \in \mathbb{R}$, $\theta[k] = (\theta_1[k],...,\theta_I[k])$ is the vector of parameters after the ($k-1$)th update, $\epsilon_k$ the step of the $k$th update and $N_k^i$ represents the noise in the measurement.

	The projection in \eqref{s.a.simple} aims at ensuring that the iterates are bounded. This is also a condition for convergence of the \ac{SA} algorithm towards the invariant sets of the equivalent \ac{ODE}.
	
	Most \ac{SON} algorithms are or can be reduced to the form of \eqref{s.a.simple}. For example in \cite{CombesInfocom2012}, a load balancing \ac{SON} is presented in this very same form. In \cite{RICHARD_RelaysTNSM} relays are self-optimized using also a \ac{SA} algorithm. In \cite{yun2010ctrl}, \ac{SA} algorithms are used for self-optimizing interference management for femtocells. A handover optimization \ac{SON} which can be rewritten as an \ac{SA} algorithm is also presented in \cite{viering2009mathematical}.

	We suppose that $N_k$ is a martingale difference sequence to meet the conditions for stand alone convergence (see \cite{Borkar,Kushner}). Namely the \ac{SA} algorithms have the same behavior as their equivalent \ac{ODE}. Now we want to check if the conditions for the \ac{SA} equivalence with the limiting \ac{ODE} are still verified after the coordination mechanism is applied. The coordinated \ac{SA} for the $i$-th mechanism is
\begin{equation}\label{s.a.coord}
\theta_i[k+1] = \left[\theta_i[k] + \epsilon_k(\sum_{j=1}^I C_{i,j} (f_j(\theta[k]) + N_k^j)) \right]_{S_i}^+
\end{equation}

	The projection ensures that the iterates are bounded. The question now is to show that $\sum_{j=1}^I C_{i,j} N_k^j$ is a Martingale difference sequence in order to meet the convergence conditions. Denoting ${\cal F}_k = \left\lbrace \sum_{j=1}^I C_{i,j} N_l^j, l<k \right\rbrace$, we have
\begin{align*}
E \left[ \sum_{j=1}^I C_{i,j} N_k^j | {\cal F}_k \right] = 
 \sum_{j=1}^I C_{i,j} E \left[ N_k^j | {\cal F}_k \right] = 0
\end{align*}
since $E[ N_k^j | N_l^j, l<k] = 0, j=1...I$. So this condition is satisfied ensuring the validity of the coordination method in a stochastic environment.

\section{Fully distributed coordination}\label{sec:full distributed}
	In this section we study fully distributed coordination, where the coordination matrix $C$ is \emph{diagonal}. As said previously, if $C_{i,j} \neq 0$, $i \neq j$ then \ac{SON} $i$ and $j$ need to exchange information. In fully distributed coordination, no information is exchanged. We prove two results. For $I = 2$ fully distributed coordination can always be achieved. For $I=3$ it is also possible if $A^{-1}$ has at least one non-zero diagonal element and impossible otherwise. 	These results are attractive from a practical point of view: if there are $3$ or less \acp{SON} to coordinate, it suffices to modify their feedback coefficient, without any exchange of information or interface between them. We say that the system can be coordinated in a \emph{fully distributed} way \ac{iff} there exists $c \in \RR^I$ such that $\diag(c) A$ is a Hurwitz matrix.
	
	The following lemma will be useful. It is a consequence of the Routh-Hurwitz theorem (\cite{gantmacher2005applications}).
	\begin{lemma}\label{lem:routh}
	Let $M$ a $I \times I$ invertible real matrix. For $I=2$ , $M$ is a Hurwitz matrix \ac{iff}
	\begin{equation} \det(M) > 0 \;,\; \tr(M) < 0\end{equation}
where $\tr$ denotes the trace of a matrix.

	For $I=3$ , $M$ is a Hurwitz matrix \ac{iff} \begin{equation} \det(M) < 0 \;,\; \tr(M) < 0 \;,\;  \tr(M) \tr(M^{-1})> 1.\end{equation}
	\end{lemma}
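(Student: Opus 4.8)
The plan is to reduce both statements to the classical Routh-Hurwitz criterion applied to the characteristic polynomial of $M$, and then to rewrite the resulting conditions on the polynomial's coefficients in terms of $\det(M)$, $\tr(M)$ and $\tr(M^{-1})$. Recall that for an $I \times I$ matrix the characteristic polynomial $p(\lambda) = \det(\lambda I - M)$ has coefficients that are the signed elementary symmetric functions of the eigenvalues, and that $M$ is Hurwitz precisely when all roots of $p$ lie in the open left half-plane. For $I=2$ one writes $p(\lambda) = \lambda^2 - \tr(M)\lambda + \det(M)$; the degree-two Routh-Hurwitz criterion states that both roots have strictly negative real part if and only if both non-leading coefficients are positive, i.e. $-\tr(M) > 0$ and $\det(M) > 0$. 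This is exactly the claimed condition, so the $I=2$ case follows immediately.

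For $I=3$ I would first record $p(\lambda) = \lambda^3 - \tr(M)\lambda^2 + m_2\,\lambda - \det(M)$, where $m_2$ denotes the sum of the three $2\times 2$ principal minors of $M$ (equivalently the second elementary symmetric polynomial $e_2$ of the eigenvalues, with $e_1 = \tr(M)$ and $e_3 = \det(M)$). Writing $p(\lambda) = \lambda^3 + a_1\lambda^2 + a_2\lambda + a_3$ with $a_1 = -\tr(M)$, $a_2 = m_2$, $a_3 = -\det(M)$, the degree-three Routh-Hurwitz criterion asserts that $M$ is Hurwitz if and only if $a_1 > 0$, $a_3 > 0$ and $a_1 a_2 > a_3$. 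The first two conditions translate directly into $\tr(M) < 0$ and $\det(M) < 0$, matching two of the three stated inequalities.

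The crux is to turn the remaining condition $a_1 a_2 > a_3$ into $\tr(M)\tr(M^{-1}) > 1$. The key identity is $\tr(M^{-1}) = e_2/e_3 = m_2/\det(M)$, which one obtains either by summing the reciprocals of the eigenvalues or, without invoking diagonalizability, from the adjugate formula $M^{-1} = \mathrm{adj}(M)/\det(M)$ together with the fact that the diagonal cofactors of $M$ are exactly its $2\times 2$ principal minors. Substituting $m_2 = \det(M)\,\tr(M^{-1})$, $a_1 = -\tr(M)$ and $a_3 = -\det(M)$ into $a_1 a_2 > a_3$ gives $-\tr(M)\det(M)\tr(M^{-1}) > -\det(M)$; dividing through by $\det(M)$, which is strictly negative by the already-established condition $a_3 > 0$, reverses the inequality and yields exactly $\tr(M)\tr(M^{-1}) > 1$.

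The main obstacle is bookkeeping rather than conceptual: one must keep the signs of the coefficients straight when passing between $p$ and the symmetric functions of the eigenvalues, and in particular remember that $\det(M)$ is \emph{negative} in the $I=3$ case, so dividing by it flips the final inequality. A secondary point worth stating carefully is that the identity $m_2 = \det(M)\,\tr(M^{-1})$ requires $M$ to be invertible, which is part of the hypothesis, so that $\tr(M^{-1})$ is well-defined throughout.
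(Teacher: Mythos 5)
Your proof is correct and takes essentially the same route as the paper: apply the Routh--Hurwitz criterion to the characteristic polynomial, read off $\tr(M)<0$ and the determinant condition, and convert the remaining cubic condition into $\tr(M)\tr(M^{-1})>1$ using an identity that expresses the middle coefficient as $\det(M)\,\tr(M^{-1})$, dividing by the negative $\det(M)$ to flip the inequality. The only cosmetic difference is that the paper writes that coefficient as $-\tfrac{1}{2}\left(\tr(M^2)-\tr(M)^2\right)$ whereas you write it as the sum of principal $2\times 2$ minors via the adjugate; these agree by Newton's identity, so the two arguments coincide.
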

	For $I = 2$ mechanisms, the system can always be coordinated in a fully distributed way as shown by Theorem~\ref{th:coord2}.
	\begin{theorem}\label{th:coord2}
		For $I=2$, the system can always be coordinated in a fully distributed way. $\diag(c) A$ is a Hurwitz matrix \ac{iff}:
		\begin{equation} c \in  {\cal C} = \Set{ c  :   c_1 A_{1,1} + c_2 A_{2,2} < 0  ,  c_1 c_2 \det(A) > 0 },\end{equation} and ${\cal C}$ is not empty since:
	\begin{equation} \left(1, \sign \det(A) \frac{\abs{A_{1,1}}}{2 \abs{A_{2,2}}}\right) \in {\cal C}.\end{equation}
	\end{theorem}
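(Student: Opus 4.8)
The plan is to reduce everything to the two-dimensional Routh--Hurwitz criterion of Lemma~\ref{lem:routh}. First I would write out the product explicitly. Since $\diag(c)$ scales the $i$-th row of $A$ by $c_i$,
\begin{equation*}
\diag(c) A = \begin{pmatrix} c_1 A_{1,1} & c_1 A_{1,2} \\ c_2 A_{2,1} & c_2 A_{2,2} \end{pmatrix},
\end{equation*}
so its trace is $c_1 A_{1,1} + c_2 A_{2,2}$ and, using multiplicativity of the determinant together with $\det(\diag(c)) = c_1 c_2$, its determinant is $c_1 c_2 \det(A)$.

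With these two quantities in hand the characterization is immediate. Lemma~\ref{lem:routh} states that for $I=2$ a matrix $M$ is Hurwitz \ac{iff} $\det(M) > 0$ and $\tr(M) < 0$. Substituting $M = \diag(c) A$ turns these two inequalities into exactly $c_1 c_2 \det(A) > 0$ and $c_1 A_{1,1} + c_2 A_{2,2} < 0$, which is the definition of $\cal C$. I note that the first inequality already forces $c_1 c_2 \neq 0$, so $\diag(c) A$ is automatically invertible and the hypothesis of Lemma~\ref{lem:routh} is met; no separate argument is needed.

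It then remains to show $\cal C \neq \emptyset$ by checking the proposed witness $c = (1, \sign\det(A)\,\abs{A_{1,1}}/(2\abs{A_{2,2}}))$. Here I would invoke the standing assumption that each \ac{SON} is stand-alone stable, i.e. $A_{1,1} < 0$ and $A_{2,2} < 0$ (so in particular both are nonzero and the witness is well defined). For the determinant condition, $c_1 c_2 \det(A) = \sign\det(A)\,\det(A)\,\abs{A_{1,1}}/(2\abs{A_{2,2}}) = \abs{\det(A)}\,\abs{A_{1,1}}/(2\abs{A_{2,2}}) > 0$ since $A$ is invertible. For the trace condition, using $\sign(A_{2,2}) = -1$ and then $\abs{A_{1,1}} = -A_{1,1}$,
\begin{equation*}
c_1 A_{1,1} + c_2 A_{2,2} = A_{1,1} - \sign\det(A)\,\frac{\abs{A_{1,1}}}{2} = A_{1,1}\left(1 + \frac{1}{2}\sign\det(A)\right).
\end{equation*}
The factor $1 + \frac{1}{2}\sign\det(A)$ lies in $\{1/2, 3/2\}$ and is thus strictly positive, while $A_{1,1} < 0$, so the product is strictly negative. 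Both defining inequalities hold, hence the witness belongs to $\cal C$.

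The computation is essentially mechanical; the only point requiring care is the sign bookkeeping in the trace condition, where the role of stand-alone stability ($A_{i,i} < 0$) becomes visible. It is precisely this that guarantees that the sign of the chosen $c_2$ (fixed by the determinant condition through $\sign\det(A)$) still yields a negative trace regardless of whether $\det(A)$ is positive or negative. I expect no genuine obstacle beyond keeping these signs straight.
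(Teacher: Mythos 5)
Your proof is correct and takes essentially the same approach as the paper: compute $\tr(\diag(c)A)$ and $\det(\diag(c)A)$, apply the $I=2$ case of Lemma~\ref{lem:routh}, and verify the proposed witness. The only difference is that the paper dismisses the witness check as ``by inspection,'' whereas you carry out the sign bookkeeping explicitly, correctly invoking the standing stand-alone stability assumption $A_{i,i}<0$.
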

	\begin{proof}
		 $\tr( \diag(c) A) = c_1 A_{1,1} + c_2 A_{2,2}$ and $\det( \diag(c) A) =  c_1 c_2 \det(A)$.	Using Lemma~\ref{lem:routh} proves the first part of the result. ${\cal C}$ is not empty, since one of its elements is given by inspection of the proposed value.
	\end{proof}	
	For $I = 3$ mechanisms, the system can also be coordinated in a fully distributed way providing that the inverse of $A$ has one non-zero diagonal element as shown by Theorem~\ref{th:coord3}.
\begin{theorem}\label{th:coord3}
		For $I=3$, the system can be coordinated in a fully distributed way if $B = A^{-1}$ has at least one non-zero diagonal element. Assume that $B_{2,2} \neq 0$ without loss of generality. Consider $\epsilon > 0$, and define $C(\epsilon) = \diag(1 , \epsilon c_2 , \epsilon c_3 )$ with:
		\begin{equation} \frac{B_{2,2}}{c_2} + \frac{B_{3,3}}{c_3} < 0 \;,\; c_2 c_3 \det(A) < 0.\end{equation}
		A possible choice for $(c_2,c_3)$ is $c_2 =  - B_{2,2}$ and $c_3 = -2 \sign( \det(A) c_2) \abs{B_{3,3}}$ if $B_{3,3} \neq 0$ and $c_3 = - \sign( \det(A) c_2)$ otherwise.

		Then there exists $\epsilon_0$ such that $ C(\epsilon) A$ is a Hurwitz matrix for $0 < \epsilon < \epsilon_0$.
		
		If $B$ has a null diagonal, the system cannot be coordinated.
\end{theorem}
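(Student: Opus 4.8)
The plan is to reduce the Hurwitz test entirely to the three scalar inequalities of Lemma~\ref{lem:routh} applied to $M(\epsilon) = C(\epsilon) A$, and then to track how each inequality behaves as $\epsilon \to 0^+$. Since $C(\epsilon) = \diag(1,\epsilon c_2, \epsilon c_3)$ and $A$ is invertible (so $\det(A)\neq 0$), the three relevant quantities are
\begin{align*}
\det(M(\epsilon)) &= \epsilon^2 c_2 c_3 \det(A), \\
\tr(M(\epsilon)) &= A_{1,1} + \epsilon\,(c_2 A_{2,2} + c_3 A_{3,3}),
\end{align*}
and, using $M(\epsilon)^{-1} = B\,\diag\!\big(1,\tfrac{1}{\epsilon c_2},\tfrac{1}{\epsilon c_3}\big)$,
\begin{equation*}
\tr(M(\epsilon)^{-1}) = B_{1,1} + \frac{1}{\epsilon}\Big(\frac{B_{2,2}}{c_2} + \frac{B_{3,3}}{c_3}\Big).
\end{equation*}

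Next I would check the three conditions of Lemma~\ref{lem:routh} in turn. The determinant condition $\det(M(\epsilon)) < 0$ holds for \emph{every} $\epsilon > 0$, precisely because of the hypothesis $c_2 c_3 \det(A) < 0$, so it is free and uniform in $\epsilon$. For the trace condition, $\tr(M(\epsilon)) \to A_{1,1}$ as $\epsilon \to 0$; under the standing assumption of stand-alone stability we have $A_{1,1} < 0$, hence $\tr(M(\epsilon)) < 0$ for all sufficiently small $\epsilon$. The delicate one is the third condition $\tr(M(\epsilon))\,\tr(M(\epsilon)^{-1}) > 1$. Writing $P = \frac{B_{2,2}}{c_2} + \frac{B_{3,3}}{c_3}$ and expanding the product, the only term that diverges as $\epsilon \to 0^+$ is $A_{1,1} P / \epsilon$; with $A_{1,1} < 0$ and the hypothesis $P < 0$ this leading term is positive and tends to $+\infty$, so the product eventually exceeds $1$. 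Taking $\epsilon_0$ to be the smaller of the two thresholds produced by the second and third conditions then makes $M(\epsilon)$ Hurwitz for all $0 < \epsilon < \epsilon_0$.

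It remains to verify that the explicit $(c_2,c_3)$ realizes the two structural inequalities $P < 0$ and $c_2 c_3 \det(A) < 0$ invoked above. With $c_2 = -B_{2,2}\neq 0$ one has $B_{2,2}/c_2 = -1$, while the two cases for $c_3$ give $B_{3,3}/c_3 \in \{0, \pm 1/2\}$, so $P \in \{-1/2,-1,-3/2\}$ is always negative; and a short sign computation using $c_3 = -2\,\sign(\det(A) c_2)\,\abs{B_{3,3}}$ (resp. $c_3 = -\sign(\det(A)c_2)$) yields $c_2 c_3 \det(A) = -2\,\abs{B_{3,3}}\,\abs{c_2 \det(A)}$ (resp. $-\abs{c_2\det(A)}$), which is negative because $A$ is invertible and $B_{2,2}\neq 0$. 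I expect the main obstacle to be exactly this asymptotic bookkeeping: one must confirm that the $1/\epsilon$ term in the third condition genuinely dominates the bounded terms and carries the correct sign, and that a single $\epsilon_0$ makes all three inequalities hold at once — it is the uniformity of the determinant condition in $\epsilon$ that makes this simultaneity possible.

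Finally, for the impossibility half I would argue directly from the same criterion with a general diagonal $C=\diag(c)$. If some $c_i = 0$, then $\det(\diag(c)A)=0$, so the matrix has a zero eigenvalue and is not Hurwitz. Otherwise $\diag(c)$ is invertible, $(\diag(c)A)^{-1} = B\,\diag(c)^{-1}$, and hence $\tr((\diag(c)A)^{-1}) = \sum_{i} B_{i,i}/c_i = 0$ whenever $B$ has a null diagonal. Therefore $\tr(\diag(c)A)\,\tr((\diag(c)A)^{-1}) = 0$, which violates the strict inequality $>1$ required by Lemma~\ref{lem:routh}. So the third Routh-Hurwitz condition fails for every admissible $c$, and no diagonal coordination matrix can stabilize the system.
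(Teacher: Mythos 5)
Your proof is correct and takes essentially the same route as the paper: apply Lemma~\ref{lem:routh} to $C(\epsilon)A$, observe that $\tr(C(\epsilon)A) \to A_{1,1} < 0$ and that the product $\tr(C(\epsilon)A)\,\tr((C(\epsilon)A)^{-1})$ is dominated by the positive divergent term $\frac{A_{1,1}}{\epsilon}\bigl(\frac{B_{2,2}}{c_2}+\frac{B_{3,3}}{c_3}\bigr)$, and conclude for small $\epsilon$. You additionally spell out what the paper leaves implicit — the uniform-in-$\epsilon$ determinant condition, the sign check of the explicit $(c_2,c_3)$, and the degenerate case $c_i=0$ in the impossibility half — which only makes the argument more complete.
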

\begin{proof} We have that
	\begin{align*}\tr( C(\epsilon) A ) & = A_{1,1} + O(\epsilon), \\
	 \tr( C(\epsilon) A )  \tr( ( C(\epsilon) A)^{-1} ) &= \frac{A_{1,1}}{\epsilon} \left( \frac{B_{2,2}}{c_2} + \frac{B_{3,3}}{c_3} \right)    + O(1).\end{align*}
	 Recall that $A_{1,1} < 0$. So there exists $\epsilon_0$ such that $\tr( C(\epsilon) A ) < 0$ and $\tr( C(\epsilon) A )  \tr( ( C(\epsilon) A)^{-1} ) > 1$, if $\epsilon > \epsilon_0$. Using Lemma~\ref{lem:routh},  $C(\epsilon) A$ is a Hurwitz matrix for $0 < \epsilon < \epsilon_0$. The existence of a couple $(c_2,c_3)$ is given by inspection of the proposed value. If $B$ has a null diagonal, then $\tr( ( \diag(c) A)^{-1} ) = 0$ for all $c$, so that the conditions of Lemma~\ref{lem:routh} can never be met.
	\end{proof}
	
	For $I > 3$, the problem becomes more involved. Sufficient conditions for the existence of a diagonal matrix can be found in the literature. In particular Fisher and Fuller (1958) \cite{fisher1958stabilization} have proven that if there exists a permutation matrix P such that all leading principal sub-matrices of $\hat{A} = PAP^{-1}$ are of full rank, then A can be stabilized by scaling. 
	
	A more restrictive version of this condition which gives a simple way to construct the coordination matrix is given in the following theorem.
	\begin{theorem} \label{th:fullycoordN}
		If all leading principal sub-matrices of $A$ are of full rank, then there exists a diagonal matrix $C = diag(c_1,c_2,..,c_N) \in \mathbb{R}^{I \times I}$ that stabilizes $A$ (i.e. $CA$ is Hurwitz).
	\end{theorem}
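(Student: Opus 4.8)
The plan is to prove this by induction on the dimension $N$, lifting the time-scale separation argument used for Theorem~\ref{th:coord3} to arbitrary dimension by means of Schur complements. Write $\Delta_k = \det([A]_{k,k})$ for the $k$-th leading principal minor, with $\Delta_0 = 1$; the full-rank hypothesis is precisely that $\Delta_1, \dots, \Delta_N$ are all nonzero. The base case $N=1$ is immediate: with $A = (A_{1,1})$ and $A_{1,1} \neq 0$, the choice $c_1 = -\sign(A_{1,1})$ yields $c_1 A_{1,1} < 0$, which is Hurwitz.

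For the inductive step I would partition
\begin{equation*}
A = \begin{pmatrix} A' & u \\ v^T & A_{N,N} \end{pmatrix}, \qquad A' = [A]_{N-1,N-1}.
\end{equation*}
The leading principal minors of $A'$ are exactly $\Delta_1, \dots, \Delta_{N-1}$, hence nonzero, so the induction hypothesis supplies a diagonal matrix $C'$ of size $N-1$ with $C'A'$ Hurwitz. Set $C(\epsilon) = \diag(C', \epsilon c_N)$, so that
\begin{equation*}
C(\epsilon) A = \begin{pmatrix} C'A' & C'u \\ \epsilon c_N v^T & \epsilon c_N A_{N,N} \end{pmatrix} \xrightarrow[\epsilon \to 0]{} \begin{pmatrix} C'A' & C'u \\ 0 & 0 \end{pmatrix}.
\end{equation*}
The limiting matrix is block upper triangular, so its spectrum is the spectrum of $C'A'$ (all with negative real part) together with a single eigenvalue at $0$. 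By continuity of the spectrum, for small $\epsilon$ the matrix $C(\epsilon)A$ has $N-1$ eigenvalues close to those of $C'A'$ and one eigenvalue $\lambda_N(\epsilon)$ near $0$.

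It then remains to force this last eigenvalue onto the negative real axis, which I expect to be the main obstacle. Since $\lambda_N(\epsilon)$ is isolated from the rest for small $\epsilon$ and $C(\epsilon)A$ is real, $\lambda_N(\epsilon)$ must itself be real. Its sign is pinned down by the determinant: using $\prod_i \lambda_i(\epsilon) = \det(C(\epsilon)A) = \det(C')\,\epsilon c_N\,\Delta_N$ together with $\prod_{i<N}\lambda_i(\epsilon) \to \det(C'A') = \det(C')\,\Delta_{N-1}$, one obtains $\lambda_N(\epsilon)/\epsilon \to c_N \Delta_N/\Delta_{N-1}$. Choosing $c_N = -\sign(\Delta_N \Delta_{N-1})$, which is well defined because $\Delta_{N-1}, \Delta_N \neq 0$, makes this limit strictly negative, so $\lambda_N(\epsilon) < 0$ for all small enough $\epsilon > 0$. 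Hence $C(\epsilon)A$ is Hurwitz for such $\epsilon$, completing the induction; unwinding it produces a diagonal $C$ whose successive entries carry the widely separated scales $1 \gg \epsilon \gg \epsilon^2 \gg \cdots$ demanded by the nested constructions. The delicate points to justify carefully are the continuity and isolation of the near-zero eigenvalue — so that it is genuinely real rather than a complex pair with small positive real part — and the uniformity needed to select a single threshold $\epsilon_0$ at each level of the recursion.
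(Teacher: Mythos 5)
Your proof is correct, and the two points you flag as delicate do go through: since $C'A'$ is invertible, $0$ is a \emph{simple} eigenvalue of the limiting block-triangular matrix, so for small $\epsilon$ exactly one eigenvalue of $C(\epsilon)A$ lies in a small disk around the origin; conjugate symmetry of the spectrum of a real matrix then forces that eigenvalue to be real, and your ratio argument is legitimate because $\det(C'A') = \det(C')\Delta_{N-1} \neq 0$. Be aware, however, that this is a genuinely different argument from the paper's --- yours is essentially Fisher and Fuller's scale-separation proof run as an induction, whereas the paper's proof is a one-line algebraic one: pick the $c_i$ sequentially so that $(-1)^k c_1\cdots c_k\det([A]_{k,k}) > 0$ for all $k$, observe that the leading principal minors of $CA$ then alternate in sign, and invoke the leading-minor test for negative definiteness to conclude $CA + (CA)^T \prec 0$, hence $CA$ Hurwitz. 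What your longer route buys is soundness: the paper's conditions constrain only the \emph{signs} of the $c_i$, and the minor test it cites is a statement about symmetric matrices, so the inference breaks down for the non-symmetric matrix $CA$. Concretely, $M = \left(\begin{smallmatrix} -1 & -1 \\ 4 & 2 \end{smallmatrix}\right)$ satisfies $\det([M]_{1,1}) = -1 < 0$ and $\det(M) = 2 > 0$, so $A = M$ with $C = \diag(1,1)$ passes the paper's sign conditions, yet $\tr(M) = 1 > 0$ means $M$ is not Hurwitz, and $M + M^T$ is indefinite. Worse, the intermediate claim $CA + (CA)^T \prec 0$ can be unattainable under the theorem's hypothesis: $A = \left(\begin{smallmatrix} -1 & 1 \\ 1 & 0 \end{smallmatrix}\right)$ has nonzero leading minors, but $(CA + (CA)^T)_{2,2} = 2 c_2 A_{2,2} = 0$ for every diagonal $C$, even though $\diag(1,-1)A$ is Hurwitz. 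The ingredient missing from the sign-only argument is exactly the hierarchy of magnitudes $1 \gg \epsilon \gg \epsilon^2 \gg \cdots$ that your induction constructs; so your proof is not merely an alternative route but the argument actually needed to establish the statement, at the price of length and an appeal to continuity of eigenvalues.
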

	\begin{proof}
		Indeed, it then suffices to choose $c_1,c_2,...,c_I$ sequentially such that $(-1)^i c_1 \dots c_i \det([A]_{i,i}) > 0$ for $i=1,\dots, I$ where $[A]_{i,i}$ is the submatrix of $A$ comprised of lines 1 trough $i$ and columns 1 through $i$. This means that $\forall k = 1..I$, $(-1)^k det([CA]_{k,k}) > 0$ which implies by a known result \cite[Section 16.7]{simon1994mathematics} on negative definite matrices that all eigenvalues of $CA + (CA)^T$ are strictly negative.
	\end{proof}
	Later works have extended the Fisher and Fuller condition to more general cases \cite{roy2006some}.
	
\section{Coordination as a convex optimization problem} \label{sec:coordmeth}
	This section aims to formalize the problem of finding a coordination matrix $C$ such that \eqref{eq:son_linear_ode_coord} is stable when \eqref{eq:son_linear_ode} is not. We begin by recalling a sufficient condition for stability mentioned in Section \ref{subsec:stabchar}, declining it for the linear case.
\begin{theorem} \label{coordequilpoint}
Suppose there exists a $I \times I$ matrix $C$ verifying \\
\begin{equation} \label{condition i}
(CA)^T+CA \prec 0,
\end{equation}
then $A$ and $C$ are invertible, $\theta^*$ is the only equilibrium point of \eqref{eq:son_linear_ode_coord} and it is globally asymptotically stable.
\end{theorem}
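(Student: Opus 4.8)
The plan is to proceed in three stages: establish invertibility of $A$ and $C$, deduce uniqueness of the equilibrium, and then prove global asymptotic stability with a quadratic Lyapunov function. The whole argument is essentially the linear, $X = I$ specialization of the Lyapunov criterion recalled in Section~\ref{sec:coordinationgen}, so I expect it to be short.

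First I would show that the hypothesis $(CA)^T + CA \prec 0$ forces $CA$ to be invertible. For any nonzero vector $v$, the scalar $v^T (CA) v$ equals its own transpose, hence equals its symmetrization:
\[
v^T (CA) v = \frac{1}{2}\, v^T\big((CA)^T + CA\big) v < 0 .
\]
In particular $CA v \neq 0$ for every $v \neq 0$, so $CA$ has trivial kernel and is invertible. Since $\det(C)\det(A) = \det(CA) \neq 0$, both $C$ and $A$ are invertible, so $\theta^* = -A^{-1} b$ is well-defined and equation~\eqref{eq:son_linear_ode_coord} is meaningful. The equilibria of~\eqref{eq:son_linear_ode_coord} are exactly the solutions of $CA(\theta - \theta^*) = 0$, and because $CA$ is invertible this has the single solution $\theta = \theta^*$, giving uniqueness.

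For global asymptotic stability I would take the Lyapunov function $V(\theta) = (\theta - \theta^*)^T(\theta - \theta^*)$, corresponding to the choice $X = I$ in the sufficient condition of Section~\ref{sec:coordinationgen}. It is positive definite with its unique minimum at $\theta^*$ and is radially unbounded. Differentiating along the trajectories of~\eqref{eq:son_linear_ode_coord} and using the same symmetrization identity yields
\[
\dot V = 2(\theta - \theta^*)^T CA (\theta - \theta^*) = (\theta - \theta^*)^T\big((CA)^T + CA\big)(\theta - \theta^*),
\]
which by hypothesis is strictly negative for every $\theta \neq \theta^*$. Thus $V$ strictly decreases along every nonequilibrium trajectory, and radial unboundedness upgrades local to global asymptotic stability, completing the proof.

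I do not anticipate a serious obstacle, since the core is a textbook Lyapunov argument. The only step requiring genuine care is the invertibility claim: the assumption is stated on the \emph{symmetric part} $(CA)^T + CA$, which a priori says nothing directly about the spectrum of $CA$ itself. The passage through the quadratic-form identity $v^T M v = \frac{1}{2} v^T (M + M^T) v$ is exactly what lets me conclude that $CA$ has no zero eigenvalue, and the same identity is reused to evaluate $\dot V$, so it is worth isolating it cleanly at the start.
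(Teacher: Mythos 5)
Your proof is correct, but it takes a different route from the paper's on the stability half of the statement. For invertibility and uniqueness of the equilibrium you and the paper agree in substance, except that the paper simply asserts that \eqref{condition i} implies $CA$ is invertible, while you actually prove it via the symmetrization identity $v^T(CA)v = \tfrac{1}{2}v^T\bigl((CA)^T + CA\bigr)v < 0$ for $v \neq 0$ --- a worthwhile detail to make explicit. For global asymptotic stability, however, the paper does not use a direct Lyapunov computation: it invokes Theorem~\ref{th:diagstrictconc} (Rosen's result on diagonally strictly concave games), noting that \eqref{condition i} is exactly the sufficient condition for diagonal strict concavity stated after that theorem, since $CA$ is the Jacobian of the coordinated vector field. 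Your argument instead specializes the LMI criterion of Appendix~\ref{app:linear_ode} with $X = I$, taking $V(\theta) = (\theta-\theta^*)^T(\theta-\theta^*)$ and computing
\begin{equation*}
\dot V = (\theta - \theta^*)^T\bigl((CA)^T + CA\bigr)(\theta - \theta^*) < 0 \quad \text{for } \theta \neq \theta^*.
\end{equation*}
Each approach has a merit: the paper's keeps the result tied to the game-theoretic framework it built in Section~\ref{sec:parallel_control_loops}, at the cost of leaning on a theorem stated only for a convex \emph{bounded} set $S$ (so the ``global'' claim on all of $\mathbb{R}^I$ is not literally delivered by that citation); your quadratic Lyapunov function is self-contained, elementary, and its radial unboundedness genuinely yields global asymptotic stability on the whole space, so in that respect your proof is actually tighter than the paper's.
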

\begin{proof}
	If $CA+(CA)^T \prec 0$, then CA is invertible, and so the equation $CF(\theta) = CA(\theta - \theta^*) = 0$ has a unique solution which is $\theta^*$.

	The global asymptotic stability is obtained from Theorem \ref{th:diagstrictconc}, since condition (\ref{condition i}) implies diagonal strict concavity.
\end{proof}

Note that $\theta^*$ is also an equilibrium point of \eqref{eq:son_linear_ode}. In addition to the constraint \eqref{condition i} we need to consider an additional constraint which is related to the capability of the different \ac{SON} entities to exchange information. For example, if two \acp{SON} $i$, $j$ are located at different \acp{BS} in a \ac{LTE} network without a X2 interface between them, then the element $C_{i,j}$ in matrix $C$ must be equal to $0$. On the other hand, if  $C_{i,j}\neq0$, then updating the parameter $\theta_i$ requires the value of $f_j(\theta)$, so we have to be sure that this information can be made available. Typically in a network for example, this relates to interfaces that exist between \acp{BS}, so the system constraints will be mapped from the network architecture. To add this constraint, we define a set of system-feasible matrices in $\mathbb{R}^{N \times N}$, called $\cal C$ reflecting our system constraints.

	Denote the two constraints mentioned above stability and connectivity constraints. These two constraints may be verified by a large number of matrices, and the one with the best convergence properties is sought. From convex optimization theory, we know that iterative algorithms converge faster when their condition number gets lower \cite{pyzara2011influence}. Indeed, the solution of the system of \acp{ODE} $\dot{x} = CAx$ is in fact $x(t) = e^{t CA} x_0$. The exponential of a matrix is defined using the power series. The solution is then rewritten as
	\begin{equation*}
		x(t) = \left(\sum_{k=0}^\infty \frac{t^k}{k!} (CA)^k \right) x_0
	\end{equation*}
Now if we choose $x_0$ as an eigenvector of $CA$ with the eigenvalue $\lambda_0$, we can see that $x(t) = \left(\sum_{k=0}^\infty \frac{t^k}{k!} \lambda_0^k \right) x_0 = e^{\lambda_0 t} x_0$. Now this is true for all the eigenvectors of the matrix $CA$ so that for a random starting point $x_0$, a lower condition number will ensure a better convergence as the speed of convergence will be homogeneous across the eigenspaces.

	So without constraints, the best coordination matrix would be $-A^{-1}$, leading to a diagonal matrix $CA = -I_N$ which gives us a system with the lowest condition number i.e. 1. When taking the constraints into account, we formulate the convex optimization problem as minimizing the distance, defined in terms of the Frobenius norm, between the coordination matrix $C$ and $-A^{-1}$:

\begin{equation} \label{cvx}
	\begin{aligned}
		& \textit{minimize  } \| C + A^{-1} \|_F \\
		& \textit{s.t. } (CA)^T + CA \prec 0; C \in \cal C
	\end{aligned}
\end{equation}
where $\| . \|_F$ is the Frobenius norm defined for an $\mathbb{R}^{m \times n}$ matrix $M$ as
\begin{equation}
\|M\|_F = \sqrt{\sum_{i=1}^m \sum_{j=1}^n |M_{i,j}|^2} = \sqrt{Tr(M^T M)} = \sqrt{\sum_{i=1}^{\text{min}(m,n)} \sigma_i^2}
\end{equation}
with the $\sigma_i$ being the singular values of $M$. It is noted that the Frobenius norm is often used in the literature for finding a preconditioner that improves the convergence behavior of iterative inversion algorithms \cite{fronormprecond}.

The stability constraints are expressed in the form of \acp{LMI}. \acp{LMI} are a common tool used in control theory for assessing stability. Solving convex optimization problems with \ac{LMI} constraints is a tractable problem and fast solvers are available \cite{LMIControlBoyd}.

From the implementation point of view, the coordination process can be performed in two steps as follows. In the first step, a centralized coordination server gathers and processes data to derive the matrix $A$, performs the optimization problem \eqref{cvx} to obtain the coordination matrix $C$,
and sends each line of the matrix $C$ to the corresponding \ac{SON} entity. This step is performed off-line.
The second step is the on-line control process where each \ac{SON} performs the coordinated control, while satisfying the connectivity constraints, by using  the appropriate line of matrix $C$.

\section{Application to wireless networks}\label{sec:application}
In this section we illustrate instability and coordination in the context of wireless networks using two examples. We first show that instability occurs even in simple models with as few as two \acp{SON} in parallel. Then we apply the coordination to a use case involving 3 \acp{SON} deployed in several \acp{BS} in a \ac{LTE} network.
\subsection{Admission control and resource allocation}
\subsubsection{Model}
	We consider a \ac{BS} in downlink, serving elastic traffic. Users enter the network according to a Poisson process with arrival rate $\lambda$, to download a file of exponential size $\sigma$, with $\expec{\sigma} < +\infty$. The \ac{BS} has $x_{max}$ parallel resources available, and we write $x \in [0,x_{max}]$ the amount of resources used. We ignore the granularity of resources, either assuming that there are a large number of resources or using time sharing, using each resource a proportion $\frac{x}{x_{max}}$ of the time. Depending on the access technology, resources can be: codes in \ac{CDMA}, time slots in \ac{TDMA}, time-frequency blocks in \ac{OFDMA} etc. When a user is alone in the system, his data rate is $R x$. Users are served in a processor sharing manner (for instance through Round Robin scheduling): if there are $n$ active users, each user has a throughput of $\frac{x R}{n}$. Admission control applies. We define $\beta \geq 0$ a blocking threshold and the probability of accepting a new user when $n$ users are already present in the system is $\phi(n - \beta)$ with $\phi:\RR \to [0,1]$ a smooth, strictly decreasing function and $\phi(n) \tends{n}{+\infty} 0$. We choose $\phi$ as a logistic function for numerical calculations:
\begin{equation} 
	\phi(n) = \frac{1}{1 + e^{n}}.
\end{equation}
	
	Define $n(t)$ the number of active users in the system at time $t$, then $n(t)$ is a continuous time Markov chain. $n(t)$ is ergodic because the probability of accepting a new user goes to $0$ as $n \to \infty$. We define the load:
\begin{equation}
	\rho(x) = \frac{\lambda \expec{\sigma} }{x R}.
	\label{eq:load_defintion}
\end{equation}
We write $\pi$ the stationary distribution of $n(t)$. $n(t)$ is reversible, and $\pi$ can be derived from the detailed balance conditions:
\begin{equation}
	\pi(n,x,\beta) = \frac{ \rho(x)^n \prod_{k=0}^{n-1} \phi(k-\beta) }{\sum_{n \geq 0}  \rho(x)^n \prod_{k=0}^{n-1} \phi(k-\beta) }.
	\label{eq:statonary_distribution}
\end{equation}
Using Little's law, the mean file transfer time is given by the average number of active users divided by the arrival rate:
\begin{equation}
	T(x,\beta) =\frac{1}{\lambda} \sum_{n \geq 0} n\pi(n,x,\beta).
	\label{eq:file_transfer}
\end{equation}
Let $R_{min}$ a minimal data rate required to ensure good \ac{QoS}. We say that there is an outage in a state of the system if users have a throughput lower than $R_{min}$. When there are $n$ active users in the system, outage occurs \ac{iff}:
\begin{equation}
	n > \frac{xR}{R_{min}}.
	\label{eq:outage_cond}
\end{equation}
The outage probability is then:
\begin{equation}
	O(x,\beta) = \sum_{n \geq 0} \pi(n,x,\beta) \indic_{ (0,+\infty) }\left(n - \frac{xR}{R_{min}}  \right).
	\label{eq:outage}
\end{equation}
In this model, $x \to O(x,\beta)$ is not smooth, which is why we introduce the smoothed outage $\tilde{O}$: 
\begin{equation}
	\tilde{O}(x,\beta) = \sum_{n \geq 0} \pi(n,x,\beta)  \psi \left(n - \frac{xR}{R_{min}}\right).
	\label{eq:outage_smooth}
\end{equation}
with $\psi$ a smooth function approximating  $\indic_{(0,+\infty)}$. We also choose $\psi$ as a logistic function for numerical calculations.

 This queuing system is controlled by two mechanisms, and that control occurs on a time scale much slower than the arrivals and departures of the users, so that the mean file transfer time and outage probability are relevant performance metrics, and can be estimated from (noisy) measurements. The mechanisms are:
\begin{itemize}
	\item \emph{Resource allocation:} a mechanism adjusts the amount of used resources to reach a target outage rate. Such mechanisms have been considered in green networking when a \ac{BS} can switch off part of its resources in order to save energy. 
		
	Another application is interference coordination: using more resources will create inter-cell interference in neighboring \acp{BS} and degrade their \ac{QoS}. Hence \acp{BS} should use as little resources as possible, as long as their target \ac{QoS} is met.
	\item \emph{Admission control: } another mechanism adjusts the admission control threshold to reach a target file transfer time. In particular, it is noted that without admission control, the mean file transfer time becomes infinite in overload.
\end{itemize}
It is noted that $x \to \tilde{O}(x,\beta)$ is strictly decreasing and \newline $\beta \to T(x,\beta)$ is strictly increasing. Using the notations of Section \ref{sec:parallel_control_loops}, we have $I = 2$ control loops, $\theta_1 \equiv x$, $\theta_2 \equiv \beta$, $f_1 \equiv \tilde{O}$, $f_2 \equiv -T$. Consider $\theta^* = (x^*,\beta^*)$ an operating point. The stability in the neighborhood of $(x^*,\beta^*)$ can be calculated. The system will fail to converge to the desired operating point as long as the Jacobian matrix has a negative determinant, so that there are two eigenvalues of opposite sign:
	\begin{equation}
		-\frac{\partial \tilde{O} }{\partial x} \frac{\partial T}{\partial \beta}(x^*,\beta^*) +  \frac{\partial \tilde{O} }{\partial \beta } \frac{\partial T}{\partial x}(x^*,\beta^*) < 0
		\label{eq:instability_condition}
	\end{equation}
\subsubsection{Results}
	We now evaluate the stability region of the system numerically by checking condition~\eqref{eq:instability_condition} for various operating points. We choose the following parameter values: $\lambda = 0.5 users/s$, $\expec{\sigma} = 10 Mbits$, $R = 15 Mbits/s$, $R_{min} = 2 Mbits/s$, $x_{max} = 1$. Figure~\ref{fig:stability_region} presents the results. In the white region the system is stable, and in the gray region it is unstable. Even in such a simple setting with $1$ \ac{BS} and $2$ \ac{SON} mechanisms, instability occurs for a large set of operating points.
	\begin{figure}[htbp]
		\begin{center}
			\includegraphics[width=\figsize]{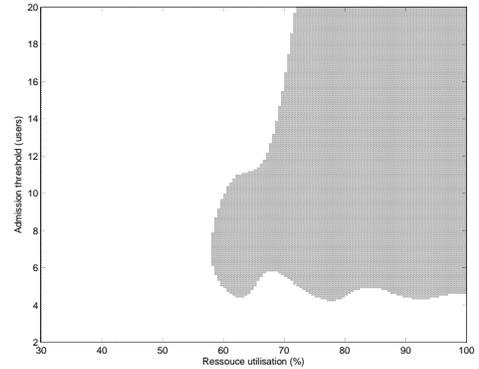}
		\end{center}
		\caption{Stability region of the system}
		\label{fig:stability_region}
	\end{figure}
	
\subsection{\ac{SON} Coordination use case}\label{useCase}
\subsubsection{System Model} \label{systmodel}
Consider three SON mechanisms deployed in \acp{BS} of a \ac{LTE} network: blocking rate minimization, outage probability minimization and load balancing \ac{SON} as presented in \cite{CombesInfocom2012}. We focus on downlink ftp type traffic model in which each user enters the network, downloads a file from its serving cell and then leaves the network.
\paragraph{Load balancing}
The \ac{SON} adjusts the \ac{BS}'s pilot powers in order to balance the loads between neighboring cells.  The corresponding ODE is given by
\begin{equation}
\dot{P_s} = P_s(\rho_1(\textbf{P})-\rho_s(\textbf{P})), \, \, \forall s=1...I
\end{equation}
where $\textbf{P} \in \mathbb{R}^I$ is the vector of \acp{BS}' pilot powers, and $\rho$ - their corresponding loads. This SON converges to a set on which all loads are equal as shown in [3, Theorem 4]. We use an equivalent formulation in order to update the pilots directly in dB:
\begin{equation}
\dot{P}_{{\text{dB}}_s} = \rho_1(\textbf{P})-\rho_s(\textbf{P}) \, \, \forall s=1...I.
\end{equation}
where ${P_{\text{dB}}}_s$ is the pilot power of \ac{BS} $s$ in decibels.
\paragraph{Blocking rate minimization}
This \ac{SON} adjusts the admission threshold in order to reach a given blocking rate target $\bar{B} > 0$. Denote by $x_s \in \mathbb{R}^+$ a real value whose floor (i.e. the smaller integer part) equals the effective admission threshold of \ac{BS} $s$, i.e. a new user finding the cell with $n$ users is blocked with probability $P(n)$, where $P(n) \to 1$ when $n \to x_s$ and $P(n) \to 0$ when $n \to 0$. The update equation for the blocking rate minimization SON is
\begin{equation} \label{bcrminsa}
x_{s,t+1} = [x_{s,t} +  \epsilon_t (B_s(x_t) - \bar{B}_s + N_t)]_{[0,x_{max}]}^+
\end{equation}
where $x_t$ is the vector of the admission thresholds of all the \acp{BS} considered at time t, $x_{max}$ a sufficiently large value and $N_t$ a martingale difference noise introduced by measuring $B[x_t]$.
The equivalent \ac{ODE} is
\begin{equation} \label{bcrmin}
\dot{x}_s = B_s(x) - \bar{B}_s.
\end{equation}
$x_s \to B_s(x)$ is a decreasing function of $x_s$ and $\lim_{x_s \to \infty} B_s(x) = 0$. So for any blocking rate target $0 < \bar{B}_s < 1$, we have $B_s(0) \geq \bar{B}_s$ and there exists a finite $x_0 \in \mathbb{N}$ such that $\forall x \geq x_0; B_s(x) \leq \bar{B}$ and $\forall x \leq x_0; B(x) \geq \bar{B}$. By projecting the right hand side of \eqref{bcrminsa} on any interval containing $[0,x_0]$, we ensure that $\text{sup}_t \| x_t \| < \infty$.

	Now considering the function $V(x) = max(0,|x-x_0|-\delta)$ for $\delta$ sufficiently small, we can see that $V(.)$ is a Lyapunov function for \eqref{bcrmin}. Indeed, we have
\begin{itemize}
\item $\forall x \in [0,+\infty), \, V(x) \geq 0$.
\item $H = \{ x \in [0,+\infty), V(x) = 0  \} \neq \emptyset $ because it contains $x_0$.
\item $\displaystyle\begin{aligned}[t]
\dot{V}(x) & = \frac{\partial V}{\partial x} \dot{x} \\ & = 
\begin{cases} 
	-(B(x)-\bar{B}) &\mbox{if } x < x_0-\delta \\
	B(x)-\bar{B} &\mbox{if } x > x_0+\delta \\
	0 &\mbox{if } x \in [x_0-\delta,x_0+\delta]
\end{cases} \\
& \leq 0.
\end{aligned}$
\item $V(x) \to +\infty$ when $x \to +\infty$.
\end{itemize}
This implies that $H$ is globally asymptotically stable for \eqref{bcrmin} (see Appendix \ref{app:odes}).

\paragraph{Outage Probability Minimization}
The aim of this \ac{SON} mechanism is to adjust the transmit data power in order to reach a target outage probability. The outage probability considered is expressed as
\begin{equation} \label{cov_prob}
K_s = \frac{1}{|Z_s|} \int_{Z_s} \mathds{1}_{\{R_s(r) \geq R_{\textit{min}}\}} (r) dr
\end{equation}
where $Z_s$ is the area covered by \ac{BS} $s$, $R_\textit{min}$ a minimum data rate and $R_s(r)$ the peak data rate obtained at position $r$ when served by BS $s$. The \ac{SA} algorithm modeling the actual control loop is
\begin{equation}\label{covmaxsa}
D_s[k+1] =  D_s[k] - \epsilon_k(K_s(\textbf{D}[k]) - \bar{K} + N_k^s)
\end{equation}
where $N_k^s$ is a martingale difference noise and $D_s$ is the transmit data power of \ac{BS} $s$. The limiting \ac{ODE} representing the mean behaviour of \ac{SA} \eqref{covmaxsa} is then
\begin{equation} \label{covmax}
\dot{D}_s = - (K_s(D) - \bar{K}).
\end{equation}
This \ac{ODE} is stable if there exists an admissible data power $D_s^*$ such that $K_s(D_s^*) = \bar{D_s}$. Indeed, $(K_s(.)-\bar{K})^2$ would then be a Lyapunov function for \eqref{covmax} since $\frac{\partial K_s}{\partial D_s} > 0$. As a consequence, the \ac{SA} \eqref{covmaxsa} converges to invariant sets of \eqref{covmax}, which means that the mechanism is standalone-stable.

\subsubsection{Numerical Results}
Consider a hexagonal network with 19 cells with omni-directional antennas as shown in \figurename \ref{cell-layout}. A wrap-around model is used to minimize truncation effects of the computational domain. It is achieved by surrounding the original network with 6 of its copies while performing the simulation within the original 19 cells.
\begin{figure}[!ht]
\centering
\includegraphics[width=3.2in]{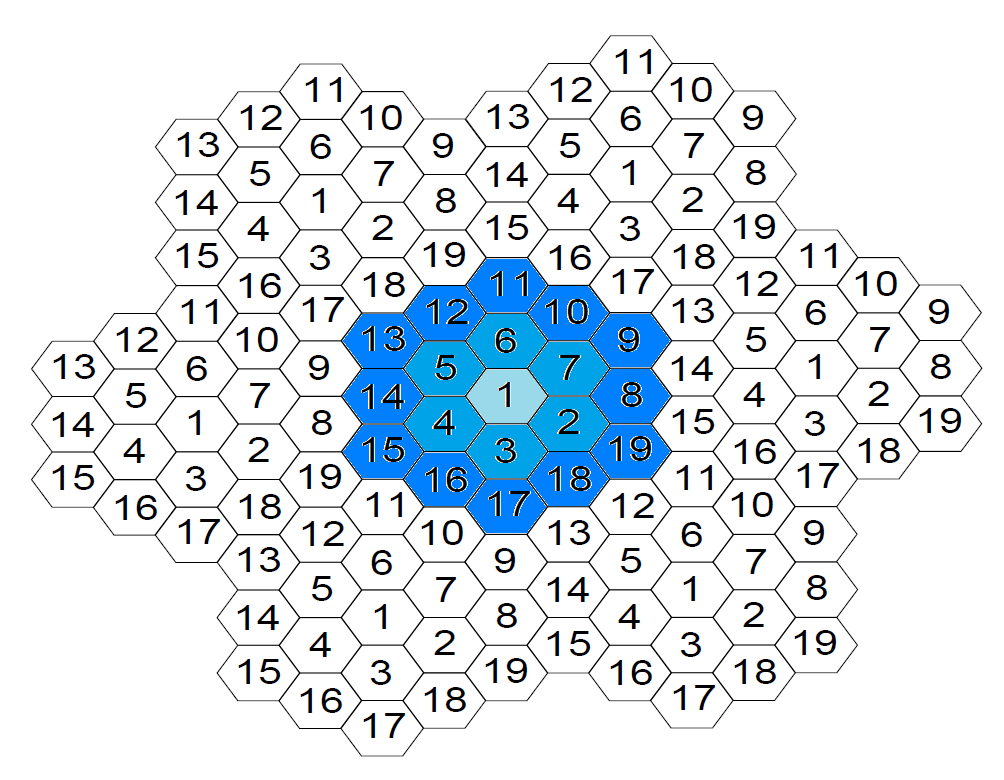}
\caption{19 cells hexagonal network with wrap-around}
\label{cell-layout}
\end{figure}
\begin{table}[!t]
\renewcommand{\arraystretch}{1.3}
\caption{Network and Traffic characteristics}
\label{tab:params}
\centering
\begin{tabular}{|c|c|}
\hline
\multicolumn{2}{|c|}{Network parameters} \\
\hline
Number of stations & 19 (with wrap-around) \\
\hline
Cell layout & hexagonal omni \\
\hline
Intersite distance & 500 m \\
\hline
Bandwidth & 20MHz \\
\hline
\multicolumn{2}{|c|}{Channel characteristics} \\
\hline
Thermal noise & -174 dBm/Hz \\
\hline
Path loss (d in km) & 128 + 36.4 $\log_{10}(d)$ dB \\
\hline
\multicolumn{2}{|c|}{Traffic characteristics} \\
\hline
Arrival rate & 40 users/s \\
\hline
Service type & FTP \\
\hline
Average file size & 10 Mbits \\
\hline
Hotspot additional arrival rate & 2 users/s \\
\hline
Hotspot position & center of \ac{BS} 1 cell \\
\hline
Hotspot diameter & 330 m \\
\hline
\multicolumn{2}{|c|}{Simulation parameters} \\
\hline
Spatial resolution & 20 m x 20 m \\
\hline
Time per iteration & 6 s \\
\hline
Minimun SINR for coverage & 0 dB \\
\hline
Target outage probability & 82\% \\
\hline
Target blocking rate & 2\% \\
\hline
\end{tabular}
\end{table}

	Table \ref{tab:params} lists the parameters used in the simulations including the environment, the network, the numerical simulation parameters and the KPIs' targets used by the SON mechanisms. The users arrive in the network according to a Poisson process with a certain arrival rate given in Table \ref{tab:params}. And a hotspot is placed at the center of the network with an additional arrival rate also given in Table \ref{tab:params}. The hotspot provides initially unbalanced loads in the network, which is of interest for the load balancing \ac{SON}.

	We activate the three SONs in each of the 7 \acp{BS} located at the center of the map and observe the stability of the \acp{SON}, with and without the coordination mechanism.
	
	The first step consists in computing the stability matrix of the linearized control system. This step is performed by evaluating the closed-form formulas of the corresponding \acp{KPI} and then computing the partial derivatives using finite differences. By choosing an adequate step size, this method yields very accurate results. However, closed-form expressions of the \acp{KPI} do not always exist, in which case estimations of the \acp{KPI} would be used instead, based on measurements from each user that arrive in the network. The stability matrix obtained through linearization already reveals instability since not all of its eigenvalues are negative, and hence the coordination step is inevitable.

	We plot the \acp{KPI} evolutions for the coordinated (in blue) and non-coordinated (in red) systems (Figures 3 to 5). The coordinated system clearly performs better. The loads are lower, the blocking rates too. The outage probabilities in the non-coordinated system diverge. The most loaded \ac{BS} outage probability is near zero while that of the others reach the maximum. This is because the decrease in cell size of the most loaded \ac{BS} is not followed by a decrease of its traffic power, causing more interference on its neighbors which have increased their cell size. The goals of each \ac{SON} are more or less achieved in the coordinated system. And the loads are balanced very quickly (less than 10 minutes), this shows the usefulness of distributed \acp{SON} over centralized ones.
	
	In \figurename \ref{coordk2}, we can see that the outage of \ac{BS} 1 in the coordinated network is good but off the target we had set for the outage \ac{SON}. This is a consequence of putting together many types of \acp{SON} which are not homogeneous. If needed, the \ac{SON} entities can be harmonized by using some weights. We now investigate the impact of the weights on the stationary \acp{KPI} of the system.
	
	Figures \ref{op_point} and \ref{op_point2} compare the final values of the \acp{KPI} of the coordinated (in blue) and non-coordinated (in red) systems when they reach their permanent state for different weight vectors. For equal weight across all SONs, we can see in \figurename \ref{op_point} that the system is rather drifted towards balancing the loads as not all the \acp{BS} reach their outage target. Actually traffic power of \ac{BS} 1 (the most loaded one) increases to absorb more traffic while its cell size is reduced leading to a smaller outage probability.

	We then give more importance to outage target in the \acp{SON} by increasing 20 times its weight. In Figure \ref{op_point2}, we plot the numerical results obtained. The outages are closer to their goals but the system does not balance the loads anymore. This shows us that with the coordination, a compromise is made. This compromise can be changed by adjusting the weights of the \acp{SON} to reflect the policies or the priorities of the network operator.

\begin{figure}[!ht]
\centering
\includegraphics[width=3in]{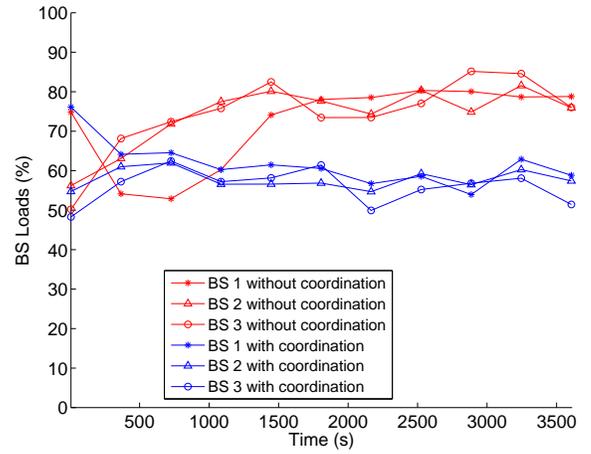}
\caption{Impact of Coordination on Loads}
\label{coordk1}
\end{figure}

\begin{figure}[!ht]
\centering
\includegraphics[width=3in]{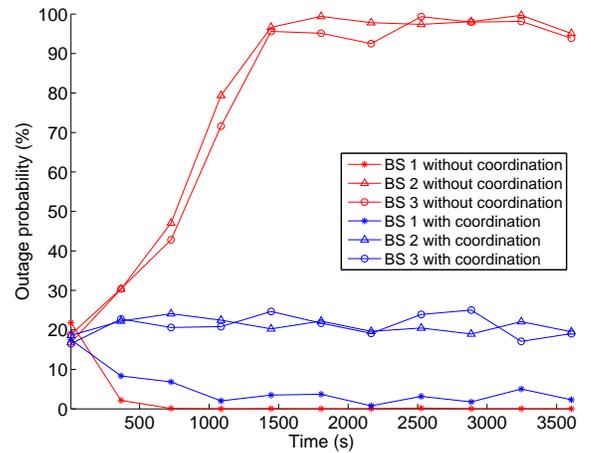}
\caption{Impact of Coordination on Coverage Probabilities}
\label{coordk2}
\end{figure}

\begin{figure}[!ht]
\centering
\includegraphics[width=3in]{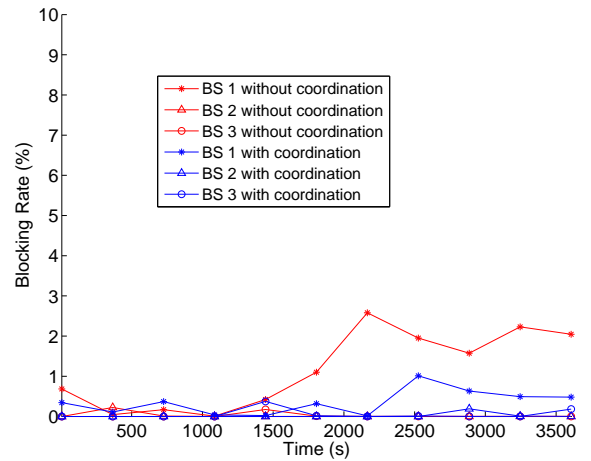}
\caption{Impact of Coordination on Blocking Rates}
\label{coordk3}
\end{figure}

\begin{figure}[!ht]
\centering
\includegraphics[width=3in]{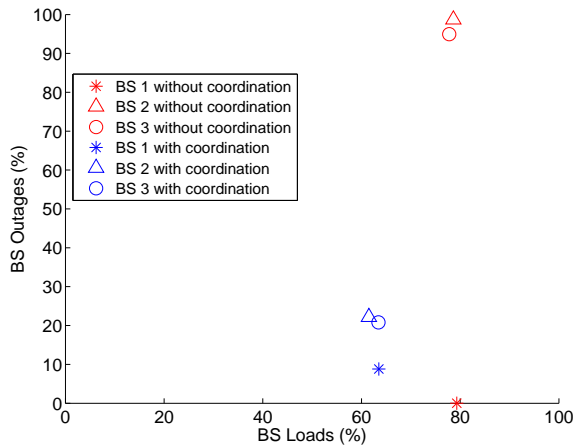}
\caption{Stationary KPIs with with all \acp{SON} equally weighted}
\label{op_point}
\end{figure}

\begin{figure}[!ht]
\centering
\includegraphics[width=3in]{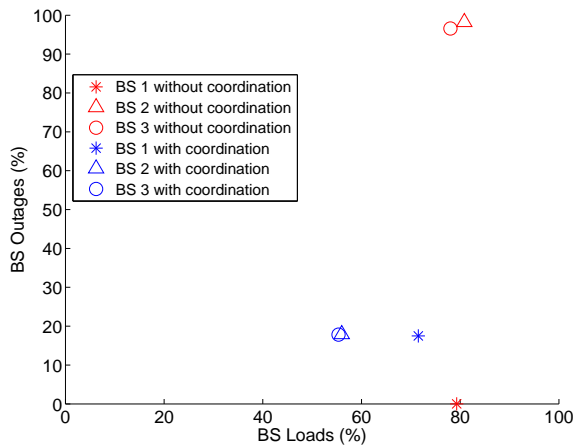}
\caption{Stationary KPIs with outage probability prioritized}
\label{op_point2}
\end{figure}

\section{Concluding Remarks} \label{sec:conclusion}
In this paper we have studied the problem of coordinating multiple \ac{SON} entities operating in parallel.  Using tools from control theory and Lyapunov stability, we have proposed a coordination mechanism to stabilize the system. The problem of finding a coordination matrix has been formulated as a convex optimization problem with \acp{LMI} constraints which ensures that the system of \ac{SON}s remain distributed. The coordination can be implemented in a distributed fashion so it is scalable with respect to the number of \acp{SON}. We have also shown that the coordination solution remains valid in the presence of measurement noise, using stochastic approximation. Instability in the context of wireless networks have been illustrated with an example. We have shown that even for two control loops, instability can occur, and the influence of network geometry has been investigated. A practical use case of the coordination method has been presented in a LTE network implementing three distributed SON functions deployed in several base stations. The coordination reveals to be necessary in this use case since the system of \ac{SON}s has shown to be unstable without coordination. The different \ac{SON} entities achieve their respective goals when coordinated and the network is stabilized. This use case has also shown that in spite of the linear control assumption, the method remains effective when applied to \ac{SON} functionalities that are not linear in general.

\appendices
\section{Proof of Lemma \ref{lem:routh}}
\begin{proof}
	We denote by $P(\lambda) = \det(\lambda I_{I} - M )$ the characteristic polynomial of $M$. $M$ is a Hurwitz matrix \ac{iff} $P$ has no roots with positive real part. For $I=2$, \begin{equation}P(\lambda) = \lambda^2 - \tr(M)\lambda + \det(M) .\end{equation} By the Routh-Hurwitz theorem, $P$ has no roots with positive real part iff:
	\begin{equation} -\tr(M) > 0 \;,\; -\tr(M)\det(M) > 0 \end{equation} which proves the result. For $I=3$, \begin{equation*}P(\lambda) = \lambda^3 - \tr(M)\lambda^2 - \frac{\lambda}{2} ( \tr(M^2) - \tr(M)^2 ) - \det(M).\end{equation*} By the Routh-Hurwitz theorem, $P$ has no roots with positive real part iff: $-\tr(M) > 0$, $-\det(M) > 0$ and:
	\begin{equation} \frac{\tr(M)}{2} (  \tr(M^2) - \tr(M)^2  ) + \det(M) > 0. \end{equation}
	For $I=3$, we have that: 	\begin{equation}\tr(M^2) - \tr(M)^2 = - 2 \det(M) \tr(M^{-1}),\end{equation}
	so that the last condition reduces to \begin{equation} \tr(M) \tr(M^{-1}) - 1 > 0\end{equation} proving the result.
\end{proof}

\section{Asymptotic behavior of ODEs}\label{app:odes}
	Consider the \ac{ODE}:
\begin{equation}
	\dot{\theta} = F(\theta),
\end{equation}
	which we assume to have a unique solution for each initial condition defined on $\RR^+$. We write $\Phi(t,\theta(0))$ the value at $t$ of the solution for initial condition $\theta(0)$. We denote by $d_{{\cal U}}(\theta) = \infu{u \in {\cal U}} \norm{ \theta - u}$ the distance to set ${\cal U}$. We say that ${\cal U}$ is \emph{invarient} if $\theta(0) \in {\cal U}$ implies $\Phi(t,\theta(0)) \in {\cal U} $ , $t \in \RR^+$. We say that ${\cal U}$ is \emph{Lyapunov stable} if for all $\delta_1 > 0$ there exists $\delta_2 > 0$ such that $d_{{\cal U}}(\theta(0)) \leq \delta_2$ implies $d_{{\cal U}}(\Phi(t,\theta(0))) \leq \delta_1$ , $t \in \RR^+$. A compact invariant set ${\cal U}$ is an \emph{attractor} if there is an open invariant set ${\cal A}$ such that $\theta(0) \in {\cal A}$ implies $d_{{\cal U}}(\Phi(t,\theta(0))) \tends{t}{+\infty} 0 $ . ${\cal A}$ is called the \emph{basin of attraction}.  A compact invariant set ${\cal U}$ is \emph{locally asymptotically stable} if it is both Lyapunov stable and an attractor. If its basin of attraction ${\cal A}$ is equal to the whole space then ${\cal U}$ is \emph{globally asymptotically stable}. Asymptotic stability is often characterized using \emph{Lyapunov functions}. A positive, differentiable function $V:\RR^I \to \RR$, is said to be a Lyapunov function if  $t \mapsto V(\Phi(t,\theta(0)))$ is decreasing, and strictly decreasing whenever $V(\Phi(t,\theta(0))) > 0$. Then the set of zeros of $V$ is locally asymptotically stable. If we add the condition $V(\theta) \tends{\norm{\theta}}{+\infty} +\infty$, then we have global asymptotic stability.

\section{Linear ODEs}\label{app:linear_ode}
	Consider the \ac{ODE}:
\begin{equation}\label{eq:son_linear_ode_app}
	\dot{\theta} = A (\theta - \theta^*).
\end{equation}
Its solution has the form:
\begin{equation}\label{eq:son_linear_ode_sol}
	\theta(t) =  \theta^* +  e^{t A} (\theta(0) - \theta^*).
\end{equation}
	We denote by $\prec$ positive negativity for symmetric matrices. $\theta^*$ is asymptotically stable \ac{iff} all the eigenvalues of $A$ have a strictly negative real part. Alternatively, asymptotic stability applies \ac{iff} there exists $0 \prec X$ such that $A^T X +  X A \prec 0$. In this case, $V(\theta)= (\theta - \theta^*)^T  X (\theta - \theta^*)$ is a Lyapunov function for the \ac{ODE}. The reader can refer to \cite{LMIControlBoyd} for the linear matrix inequality approach to stability.
	
\section{Definition of diagonal strict concavity}
	Diagonal strict concavity is a property introduced in \cite{RosenConvexGame} for analyzing equilibrium of n-person games. Consider $I$ functions $\theta \to g_i(\theta)$ defined on a convex closed bounded set $S \subset \mathbb{R}^I$, and $w_i, i=1,\dots ,I$ some real positive constants. And denote by
	\begin{equation}
		JG(\theta) = \begin{bmatrix}
						w_1 \nabla_1 g_1(\theta)\\
						.\\
						.\\
						.\\
						w_I \nabla_I g_I(\theta)
					 \end{bmatrix}.
	\end{equation}
	We say that $G(\theta)=\sum_{i=1}^I w_i g_i(\theta)$ is diagonally strictly concave for $\theta \in S$ if for every $\theta_0,\theta_1 \in S$ we have
	\begin{equation}
		(\theta_0-\theta_1)^T JG(\theta_0)+(\theta_0-\theta_1)^T JG(\theta_1) > 0
	\end{equation}

\section{Martingales}
	Martingales are commonly used to characterize noise in stochastic approximation algorithms \cite{Borkar,Kushner}. We hereby give a succinct definition of martingales and martingale differences along with an insight in why they are useful.

	Let $(\Omega,\cal F, P)$ denote a probability space, where $ \Omega $ is the sample space, $\cal F$ a $\sigma$-algebra of subsets of $\Omega$, and $P$ a probability measure on $(\Omega,\cal F)$. Let $\{ M_n\}$ be a sequence of real-valued random variables defined on $(\Omega,\cal F)$. If $\mathbb{E}(|M_n|) < \infty$ and
	\begin{equation}
		\mathbb{E}(M_{n+1} | M_i,i \leq n) = M_n
	\end{equation}
	then $\{ M_n\}$ is a martingale sequence. In this case, the sequence $N_n = M_n - M_{n-1}$ is a martingale difference sequence.
	
	An important result on martingales is the martingale convergence theorem which proves that martingale sequences converge with probability 1. This result is useful to characterize convergence of \ac{SA} algorithms which model the noise as martingale differences (see \cite{Borkar} and \cite{Kushner}).

\bibliographystyle{IEEEtran}
\bibliography{main}

\end{document}